\documentclass{article}
\usepackage{spconf,amsmath,graphicx,hyperref}
\usepackage{amssymb}
\usepackage{stfloats}
\usepackage{amsthm}
\usepackage{orcidlink}
\usepackage{pdfpages}
\usepackage{enumitem}
\usepackage{multirow}
\usepackage{amsfonts} 
\usepackage{color}
\usepackage{booktabs}
\usepackage[flushleft]{threeparttable} 
\usepackage{upgreek}
\usepackage{bm}
\usepackage{nicefrac}
\usepackage{cite}
\usepackage{mathtools}
\usepackage{stmaryrd}
\usepackage[mathscr]{euscript}
\usepackage{array}
\usepackage{mathrsfs}
\usepackage{soul}
\newtheorem{thm}{{Theorem}}
\newtheorem*{assumption}{Assumption}

\DeclareMathOperator{\supp}{supp}

 \title{
 Identifiability and Joint Recovery of a Structured Signal Ensemble From Sub-Nyquist Samples}
 
\name{Kumari~Priyanka\orcidlink{0000-0001-7512-1809} and Satish~Mulleti\orcidlink{0000-0002-3995-9070}}
\address{Department of Electrical Engineering,  Indian Institute of Technology Bombay, Mumbai, India, 400076.\\Emails: npriyanka0102@gmail.com, mulleti.satish@gmail.com}

\begin{document}
\ninept
\maketitle

\begin{abstract}
Efficient acquisition of correlated continuous-time signals is critical in applications such as distributed sensor networks and array processing. However, existing correlation models often fail to capture the shared structure of physical signals measured in close proximity. In this paper, we model each signal in an ensemble as the sum of a hidden common lowpass component and a signal-specific innovation highpass component with disjoint spectral supports, where the common bandwidth is unknown. We first derive theoretical identifiability conditions guaranteeing unique decomposition and reconstruction from subsampled observations. Further, we propose a practical recovery algorithm and a joint reconstruction framework that leverages parametric structured dictionaries parameterized by the unknown common bandwidth. Numerical experiments on an ensemble of four signals demonstrate exact reconstruction with a $25\%$ aggregate sampling rate reduction under identifiability conditions, and robust recovery with up to a $74\%$ rate reduction when all channels are sampled below the Nyquist rate, significantly reducing data acquisition and hardware overhead.
\end{abstract}

\begin{keywords}
Correlated bandlimited signals, sub-Nyquist sampling, common-innovation model, structured dictionary learning, low-rate ADCs.
\end{keywords}

\section{Introduction}
In many practical applications, such as telecommunications, radar, and sensor networks, multiple continuous-time signals are acquired simultaneously. Exploiting correlation across observed signals allows for a significant reduction in the aggregate sampling rate. A fundamental model capturing such correlation is the common-innovation structure, where signals in an ensemble share a hidden common component while retaining individual, signal-specific innovation components. This model is well-suited for distributed sensor networks measuring a shared physical field (e.g., temperature or wind velocity) across localized geographical regions, where readings exhibit overall similarity alongside local fluctuations due to factors like elevation or humidity.

Common-innovation structures have been extensively investigated within discrete sparse dictionary frameworks \cite{sarvotham2005distributed, baron2009distributed, duarte2013measurement, oghenekohwo2017low, mahyari2017hierarchical, liu2017efficient, mahyari2017structured, vu2017fast, liu2018common}. In these discrete settings, an ensemble of vectors is represented via common and channel-specific sparsifying dictionaries, enabling joint recovery from compressed measurements. The works considered both known and unknown dictionary frameworks. This framework has also been extended to accommodate unknown time delays in the common component \cite{shiraki2013simultaneous}. In addition, sparse multichannel deconvolution models have also been studied for discrete correlated signals \cite{Hormati2010Distributed, mulleti2020identifiability}.

However, natural signals are intrinsically continuous-time (analog), necessitating sampling strategies that reduce the continuous sampling rate directly. Existing continuous-time approaches typically assume low-rank matrix representations \cite{ahmed2014compressive, ahmed2017compressive, ahmed2019compressive, ahmed2023sub}, relying on random incoherent projections that are often difficult to implement in hardware. Other formulations leverage graph signal processing \cite{ni2022sampling, ravi2024subsampling, sheng2024sampling}, multichannel sparse deconvolution \cite{da2019self,mulleti2021subNyquist}, or stochastic correlation models \cite{mohammadi2017sampling, shlezinger2019joint, jin2025sampling}. Nevertheless, these frameworks do not natively capture the common-innovation structure of continuous-time signals with unknown spectral supports.

In this paper, we consider an ensemble of continuous-time bandlimited signals, each comprising a shared common component and a unique innovation component with disjoint spectral supports. Crucially, the bandwidth of the shared common component is \emph{unknown}, posing significant theoretical and algorithmic challenges. The main contributions of this work are summarized as follows:

\begin{itemize}[leftmargin=*]
    \item \textit{Theoretical Identifiability Guarantees:} We derive sufficient conditions under which an ensemble of $N \geq 3$ continuous-time signals can be uniquely identified and decomposed into their common and innovation components from sub-Nyquist samples, even when the common bandwidth is unknown.
    \item \textit{Parametric Structured Dictionary Framework:} To overcome the sequential nature of the theoretical proof, we develop a practical joint recovery algorithm. We model the common and innovation components using parametric continuous-time dictionaries parameterized by the unknown common bandwidth, formulating recovery as a joint least-squares optimization over finite samples.
    \item \textit{Empirical Validation \& Sub-Nyquist Reduction:} Numerical experiments demonstrate highly accurate recovery when two signals are sampled at Nyquist-rate, and two are sampled sub-Nyquist (reconstruction error ranging from approximately $-300$~dB to $-120$~dB) with a $25\%$ reduction in the aggregate sampling rate. When all channels are sampled sub-Nyquist, experiments show robust recovery across different common bandwidths and ensemble sizes (with average NMSE ranging from approximately $-221$~dB to $-54$~dB) with approximately a $74\%$ aggregate rate reduction.
\end{itemize}

The remainder of the paper is organized as follows. Section~\ref{sec:problem_formulation} presents the signal model and problem formulation. In Section~\ref{sec:identifiability}, we introduce the structural assumptions and establish the main identifiability theorem. Section~\ref{dictionarymethod} details the proposed joint dictionary reconstruction method and numerical evaluations, followed by concluding remarks.

\section{Problem Formulation}
\label{sec:problem_formulation}
Consider an ensemble of $N$ continuous-time correlated bandlimited signals $\{f_n\}_{n=1}^{N}$ with bandwidth $\Omega_m$ such that 
\begin{equation}\label{model}
f_n(t)=f_0(t)+g_n(t), \qquad n=1,\ldots,N, 
\end{equation}
where $f_0(t)$ denotes the common component shared across the signals, which is bandlimited to $[-\Omega_0,\Omega_0]$ with $0<\Omega_0<\Omega_m$. The signal $g_n(t)$ denotes the signal-specific innovation component with spectral support $\mathcal{B}_{\Omega_0, \Omega_m} \coloneq [-\Omega_m, -\Omega_0) \cup (\Omega_0, \Omega_m]$. The Nyquist rate (in rad/s) for each signal is $\Omega_{\text{Nyq}} = 2\Omega_m$.

Signals following the model in \eqref{model} share a common low-frequency (or gross) structure due to $f_0(t)$, with differences arising from the innovation components. The similarity among the signals typically increases as the ratio $\Omega_0/\Omega_m$ approaches one. 

Let $\{f_n(kT_n)\}_{n=1,k\in\mathbb{Z}}^N$ be the observed samples, where $T_n$ denotes the sampling period for the $n$-th signal. The classical Shannon--Nyquist sampling theorem states that a bandlimited signal with bandwidth $\Omega_m$ can be perfectly reconstructed if sampled at a rate greater than or equal to $\Omega_m/\pi$ Hz. Therefore, the required aggregate sampling rate to recover $N$ bandlimited signals without exploiting correlation is at least $\tfrac{N\Omega_m}{\pi}$. However, by leveraging the underlying correlation structure, the aggregate sampling rate can be reduced.

In this paper, we assume that the signals $\{f_n\}_{n=1}^N$ are sampled such that $\sum_{n=1}^N\tfrac{1}{T_n}<\tfrac{N\Omega_m}{\pi}.$ 
Under this framework, the maximum bandwidth $\Omega_m$ and the observations $\{f_n(kT_n):1\leq n\leq N,k\in\mathbb{Z}\}$ are known, whereas the common bandwidth $\Omega_0$, the common component $f_0$, and the innovation components $\{g_n\}_{n=1}^{N}$ are unknown. Our goal is to establish conditions under which these unknown quantities are uniquely identifiable from the observed sub-Nyquist samples and to develop a practical method to estimate $\Omega_0$ and reconstruct the signal ensemble.

%%%%%%%%%%%%%%%%%%%%%%%%%%%%%%%%%%%%%%%%%%%%%%%%%%%%

\section{Structural Assumptions and Identifiability} 
\label{sec:identifiability}
In this section, we lay down the mathematical conditions that ensure the unique identifiability of the signals from their sub-Nyquist samples. Aliasing may occur when a signal is sampled below its Nyquist rate. Since the common component is shared across the signals, it provides the additional information required for recovery. However, the innovation components could also share common components, which might lead to a non-unique decomposition. Therefore, we impose the following structural assumptions on $f_0$ and $\{g_n\}_{n=1}^N$.

\begin{assumption}\label{ass:identifiability}
There exists an unknown $\Omega_0\in(0,\Omega_m)$ such that the following conditions hold:
\begin{enumerate}[
    label=\textnormal{(A\arabic*)},
    ref=\textnormal{(A\arabic*)},
    leftmargin=*]
\item\label{commonsupport}
For every nonempty open interval $I\subset[-\Omega_0,\Omega_0]$, the Fourier transform of $f_0$, denoted as $\mathcal{F}\{f_0\}$, does not vanish almost everywhere on $I$.
\item\label{innovation} Signals indexed by the integer set $\mathcal{J}_N \subset \{1, \dots, N\}$ are sampled at the Nyquist rate, that is, for $n \in \mathcal{J}_N$, $T_n = \pi/\Omega_m$, where $|\mathcal{J}_N|\geq 2$. Moreover, for every nonempty open interval $I\subseteq\mathcal{B}_{\Omega_0,\Omega_m}$, there exist distinct $p,q\in\mathcal{J}_N$ such that $\mathcal{F}\{g_p\}$ and $\mathcal{F}\{g_q\}$ are not identical almost everywhere on $I$.

\item\label{innovationalias} 
To prevent aliasing in the innovation components of the sub-Nyquist signals, we assume that for each $n\notin\mathcal{J}_N$, there exists an integer $q\geq2$ such that the sampling rate $F_{s,n} = 1/T_n$ satisfies
\begin{equation*}
\frac{\Omega_m}{\pi q}\leq F_{s,n}\leq\frac{\Omega_0}{\pi(q-1)}.
\end{equation*}
\end{enumerate}
\end{assumption}

We now present the main identifiability result.

\begin{thm}\label{thm}
Consider the ensemble of bandlimited signals as in \eqref{model}. For $N\geq3$, if Assumptions \ref{commonsupport}--\ref{innovationalias} hold, the signals are uniquely identifiable from their measurements $\{f_n(kT_n): k \in \mathbb{Z}, 1\leq n\leq N\}$, where the aggregate sampling rate is sub-Nyquist, that is, $\sum_{n=1}^{N}\frac{1}{T_n}<\frac{N\Omega_m}{\pi}.$
\end{thm}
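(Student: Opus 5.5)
The argument proceeds sequentially in three stages: first recover the Nyquist-sampled channels, then identify $\Omega_0$ and $f_0$ from them, and finally recover the sub-Nyquist channels using the now-known common part. \textbf{Step 1.} For every $n\in\mathcal{J}_N$ we have $T_n=\pi/\Omega_m$, so the Shannon--Nyquist theorem recovers $f_n$ exactly. Since $|\mathcal{J}_N|\ge 2$, we know at least two full signals $f_p,f_q$. The rate claim is immediate: $N\ge3$ and $|\mathcal{J}_N|\le N-1$ (otherwise nothing is sub-Nyquist). Each $n\notin\mathcal{J}_N$ has $F_{s,n}\le \Omega_0/(\pi(q-1))<\Omega_m/\pi$ for $q\ge2$, so the sum is below $N\Omega_m/\pi$. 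I would remark that if the theorem intends $\mathcal{J}_N$ to be a proper subset, this should be stated explicitly.

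\textbf{Step 2: identify $\Omega_0$ and $f_0$.} Take any two distinct $p,q\in\mathcal{J}_N$. Then $f_p-f_q=g_p-g_q$ has spectrum inside $\mathcal{B}_{\Omega_0,\Omega_m}$ and vanishes on $[-\Omega_0,\Omega_0]$. I would define the estimator
\[
\hat\Omega_0=\inf\Bigl\{|\omega|:\omega\in\bigcup_{p\ne q\in\mathcal{J}_N}\operatorname{ess\,supp}\mathcal{F}\{f_p-f_q\}\Bigr\}.
\]
By (A2), for every open $I\subset(\Omega_0,\Omega_0+\epsilon)$ some pair has differing spectra on $I$. So the union of the supports accumulates at $\Omega_0$, and $\hat\Omega_0=\Omega_0$. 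The same holds on the negative side.

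I expect uniqueness to be the main obstacle. Suppose a second triple $(\Omega_0',f_0',\{g_n'\})$ also satisfies the assumptions and reproduces the same data. I would show $\Omega_0'=\Omega_0$ using two facts together. First, the differences $f_p-f_q$ are model-independent, since these channels are fully known, so $\Omega_0'\le\hat\Omega_0=\Omega_0$. Second, if $\Omega_0'<\Omega_0$, then on $(\Omega_0',\Omega_0)$ we have $\mathcal{F}\{f_p\}=\mathcal{F}\{f_0\}$ for all $p\in\mathcal{J}_N$, hence $\mathcal{F}\{g'_p\}=\mathcal{F}\{g'_q\}$ there. This contradicts (A2) for the primed model.

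With $\Omega_0$ fixed, set $\mathcal{F}\{f_0\}=\mathcal{F}\{f_p\}\mathbf 1_{[-\Omega_0,\Omega_0]}$ and $g_p=f_p-f_0$ for $p\in\mathcal{J}_N$. Condition (A1) guarantees that $f_0$ genuinely occupies all of $[-\Omega_0,\Omega_0]$, so the decomposition is not degenerate.

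\textbf{Step 3: sub-Nyquist channels.} For $n\notin\mathcal{J}_N$, form $f_n(kT_n)-f_0(kT_n)=g_n(kT_n)$, using the known $f_0$. This reduces the problem to recovering a bandpass signal on $\Omega_0<|\omega|\le\Omega_m$ from uniform samples. The condition in (A3), $\Omega_m/(\pi q)\le F_{s,n}\le\Omega_0/(\pi(q-1))$, is exactly the classical bandpass (Kohlenberg/Vaughan) condition for uniform sampling of the band $(\Omega_0,\Omega_m]$ with integer band position $q$. In angular terms, $2\Omega_m/q\le 2\pi F_s\le 2\Omega_0/(q-1)$. Under it, the spectral replicas of $G_n$ do not overlap, so $g_n$ is recovered by bandpass interpolation. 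Then $f_n=f_0+g_n$.

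I would verify the non-overlap directly. The positive band $[\Omega_0,\Omega_m]$ and the replicas of the negative band shifted by $2\pi F_s m$ must avoid each other. This amounts to $(q-1)2\pi F_s\le2\Omega_0$ and $q\,2\pi F_s\ge2\Omega_m$, which is precisely (A3).
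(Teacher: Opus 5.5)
Your proposal is correct and follows the paper's proof essentially step for step: Shannon recovery of the channels in $\mathcal{J}_N$, identification of $\Omega_0$ as the lower edge of the essential support of the pairwise differences $\mathcal{F}\{f_p\}-\mathcal{F}\{f_q\}$ (your $\hat\Omega_0$ coincides with the paper's $\sup\{\Omega>0: D=0 \text{ a.e. on } [-\Omega,\Omega]\}$), lowpass extraction of $f_0$, and bandpass recovery of $g_n$ from $f_n(kT_n)-f_0(kT_n)$ via (A3). Three of your additions tighten points the paper leaves implicit: the explicit competing-model argument showing $\Omega_0'=\Omega_0$, the direct check that (A3) is the classical bandpass non-aliasing condition, and the remark that the strict sub-Nyquist inequality requires $\mathcal{J}_N\neq\{1,\ldots,N\}$.
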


\begin{proof}
 Assuming $N\geq3$ and $|\mathcal{J}_N|>1,$ we first reconstruct $f_n,$ for $n\in\mathcal{J}_N$ from their samples using the Shannon-Nyquist theorem. Then, from these signals, we determine $\Omega_0$ as follows. For $p, q\in\mathcal{J}_N,$ $p\neq q,$ define
\begin{equation*}
D(\Omega)
=\sum_{\substack{p, q\in\mathcal J_N\\ p<q}}\left|\mathcal{F}\{f_p\}(\Omega)
-\mathcal{F}\{f_q\}(\Omega)
\right|.
\end{equation*}
Since
$\supp\mathcal{F}\{g_p\}(\Omega)\subseteq\mathcal{B}_{\Omega_0, \Omega_m},$ for every $n\in\mathcal{J_N}$
$$\mathcal{F}\{f_p\}(\Omega)=\mathcal{F}\{f_0\}(\Omega), |\Omega|\leq\Omega_0.$$ 
Thus 
$$D(\Omega)=0\quad |\Omega|\leq\Omega_0.$$  
Since $\supp\mathcal{F}\{f_0\}(\Omega)\in[-\Omega_0, \Omega_0],$ for every $n\in\mathcal{J_N}$
$$\mathcal{F}\{f_n\}(\Omega)=\mathcal{F}\{g_n\}(\Omega),~\Omega\in\mathcal{B}_{\Omega_0, \Omega_m}.$$ 
Therefore, $D$ cannot vanish almost everywhere on any nonempty open interval contained in $\mathcal{B}_{\Omega_0,\Omega_m}$ by Assumption \ref{innovation}.
Consequently,
\begin{equation*}
\Omega_0
=\sup\left\{\Omega>0:D(\omega)=0\ \text{ a.e. for }|\omega|\leq\Omega\right\}.
\end{equation*}
Thus, the common component is obtained
from any reconstructed channel $n\in\mathcal J_N$ by retaining its spectrum
over $[-\Omega_0, \Omega_0]$. For every $n\notin\mathcal J_N$, the innovation
samples are then obtained as
$$
g_n(kT_n)
=
f_n(kT_n)-f_0(kT_n),
\qquad k\in\mathbb Z.
$$
By Assumption \ref{innovationalias}, these bandpass samples uniquely
determine $g_n(t)$. Hence,
$$
f_n(t)=f_0(t)+g_n(t),
\qquad 1\leq n\leq N,
$$
which proves the identifiability of the ensemble.
\end{proof}

The proof relies on the fact that when there are three or more signals, the ensemble can be theoretically recovered, provided at least two signals are sampled at the Nyquist rate while the remaining are subsampled. This structure allows us to uniquely identify the hidden common component and its bandwidth, which is then used to reconstruct the sub-Nyquist signals. This strategy is conceptually similar to the sampling methodology in \cite{mulleti2020identifiability, mulleti2021subNyquist} for sparse multichannel blind deconvolution (MBD), where a few channels are measured at the Nyquist rate and the rest at a lower rate.

While Theorem \ref{thm} establishes theoretical guarantees, it does not provide a practical or real-time recovery procedure. To address this, we propose a joint reconstruction framework that estimates the unknown common bandwidth $\Omega_0$ directly from the observed subsamples and simultaneously recovers the signal ensemble.
%%%%%%%%%%%%%%%%%%%%%%%%%%%%%%%%%%%%%%%%%%%%%%%%%%%%%%%%%%%%%%%%%%%%%%%%%%%%%%%%
\section{Dictionary-Based Reconstruction Method}\label{dictionarymethod}
To recover the signals from their low-rate samples, an algorithm in line with the proof of Theorem~\ref{thm} could be employed. However, this approach is impractical. Specifically, it first recovers the signals for $n\in \mathcal{J}_N$, followed by determining $\Omega_0$, and subsequently reconstructing the remaining signals using the estimated $\Omega_0$. Such a sequential process is unsuitable in practice, as all signals are typically observed simultaneously. Moreover, it requires an infinite number of samples and ideal filtering. This motivates the development of a structured dictionary-based method that jointly recovers the signal ensemble.

\subsection{Finite-Dimensional Reconstruction Model}
We now consider the reconstruction of correlated signals following the model in \eqref{model} from a finite number of samples per signal. To this end, we represent the common signal $f_0(t)$ and the innovations $g_n(t)$ as linear combinations of shifted expansion kernels $\phi_{_{\Omega_0}}(t)$ and $\psi_{_{\Omega_0,\Omega_m}}(t)$, respectively. The spectra of these kernels are constant over the frequency intervals $[-\Omega_0, \Omega_0]$ and $\mathcal{B}_{\Omega_0, \Omega_m}$, respectively, and zero elsewhere. The signals are thus modeled as 
\begin{align}
    f_n(t) \approx
\sum_{\ell=0}^{L-1}c_{0,\ell}\, \phi_{_{\Omega_0}}(t-\tau_\ell)
+\sum_{\ell=0}^{L-1}d_{n,\ell}\,\psi_{_{\Omega_0,\Omega_m}}(t-\rho_\ell), \label{eq:dictionary_model}
\end{align}
where $L$ is the model order, $c_{0, \ell}$ and $d_{n, \ell}$ are the coefficients corresponding to the common and innovation dictionaries, respectively, and the atom shifts are given by $\tau_\ell=(\ell-\lfloor L/2\rfloor)\pi/\Omega_0$ and $\rho_\ell=(\ell-\lfloor L/2\rfloor)2\pi/(\Omega_m-\Omega_0)$. Note that in practice, the exact model order may be unknown, and signals may experience unknown time shifts prior to sampling. In this work, we treat the model order $L$ as a fixed design parameter and leave the estimation of unknown time shifts for future work.

Under model \eqref{eq:dictionary_model}, we collect samples $f_n(kT_n)$ for $k \in \mathcal{K}_n$, where $\mathcal{K}_n$ is a finite index set and $F_{s, n} = 1/T_n$ is the corresponding sampling rate (in Hz). The objective is to jointly estimate the unknown bandwidth $\Omega_0$, the common coefficients $\{c_{0,\ell}\}$, and the innovation coefficients $\{d_{n,\ell}\}$ from the observed subsamples. We assume that the maximum bandwidth $\Omega_m$ is known and that the model order $L$ is treated as a design parameter fixed prior to reconstruction, such that
\begin{equation}\label{eq:L}
L\leq\min\limits_{1\leq n\leq N}|\mathcal{K}_n| \quad \text{and} \quad (N+1)L\leq\sum\limits_{n=1}^N|\mathcal{K}_n|.
\end{equation}
In numerical experiments, $L$ is selected via preliminary empirical evaluation and held fixed. Given $L$, we construct the structured dictionaries and unknown coefficient vectors to formulate the optimization problem for estimating $\Omega_0$.

\subsection{Dictionaries and Unknown Coefficients}
Let $\bm f_n=[f_n(kT_n)]_{k\in\mathcal K_n}^T \in\mathbb R^{|\mathcal K_n|}$ be the sample vector for the $n$-th signal. Using \eqref{eq:dictionary_model}, the sample vector is decomposed as 
\begin{equation}\label{eq:dictionary_decomposition}
\bm f_n
\approx
\bm{\Phi}_{n} \, \bm c_0
+
\bm{\Psi}_{n}\, \bm d_n,
\end{equation}
where $\bm c_0=[c_{0,0},c_{0,1},\ldots,c_{0,L-1}]^T\in\mathbb R^L$ is the common coefficient vector and $\bm d_n=[d_{n,0},d_{n,1},\ldots,d_{n,L-1}]^T\in\mathbb R^L$ is the innovation coefficient vector of the $n$-th signal. Here, $\bm{\Phi}_{n} \in \mathbb{R}^{|\mathcal{K}_n| \times L}$ and $\bm{\Psi}_{n} \in \mathbb{R}^{|\mathcal{K}_n| \times L}$ are the dictionaries corresponding to the common and innovation components, respectively. These dictionaries are parameterized by the unknown bandwidth $\Omega_0$. The $(k, \ell)$-th entries of $\bm{\Phi}_{n}$ and $\bm{\Psi}_{n}$ are defined as $\phi_{_{\Omega_0}}(kT_n-\tau_\ell)$ and $\psi_{_{\Omega_0,\Omega_m}}(kT_n-\rho_\ell)$, respectively. Note that, unlike conventional dictionary representations used in \cite{sarvotham2005distributed, baron2009distributed, duarte2013measurement, mahyari2017hierarchical, liu2017efficient, mahyari2017structured, liu2018common}, the representation in \eqref{eq:dictionary_decomposition} differs in two key aspects: first, the dictionaries in \eqref{eq:dictionary_decomposition} possess a parametric continuous-time structure, and second, the representation coefficients in our model are non-sparse. 

Across all $N$ signals, there are $L$ coefficients for the common component and $N \times L$ coefficients for the innovation components, yielding $(N+1)L$ unknown parameters. Let $K_n=|\mathcal K_n|$ denote the number of observed samples in the $n$-th signal. We construct the joint vectors and matrix:
\begin{equation*}
\bm\theta
=
\begin{bmatrix}
\bm c_0^T &
\bm d_1^T &
\cdots &
\bm d_N^T
\end{bmatrix}^T, 
\quad
\bm y
=
\begin{bmatrix}
\bm f_1^T &
\bm f_2^T &
\cdots &
\bm f_N^T
\end{bmatrix}^T,
\end{equation*}
\begin{equation*}\label{dictionarytogether}
\text{and} \quad \bm A(\Omega)
=
\begin{bmatrix}
\bm\Phi_1 & \bm\Psi_1 & \bm 0 & \cdots & \bm 0\\
\bm\Phi_2 & \bm 0 & \bm\Psi_2 & \cdots & \bm 0\\
\vdots & \vdots & \vdots & \ddots & \vdots\\
\bm\Phi_N & \bm 0 & \bm 0 & \cdots & \bm\Psi_N
\end{bmatrix}.
\end{equation*}
The joint observation model is then expressed as
\begin{equation*}\label{joint_est}
\bm y\approx\bm A(\Omega)\bm\theta.
\end{equation*}

For numerical stability, after constructing $\bm A(\Omega)$ for each candidate bandwidth $\Omega$, its nonzero columns are normalized to form $\bar{\bm A}(\Omega) = \bm A(\Omega)\bm D(\Omega)^{-1}$, where $\bm D(\Omega)$ is a diagonal matrix containing the $\ell_2$-norms of the nonzero columns of $\bm A(\Omega)$. To estimate the common bandwidth, we minimize the least-squares residual over all valid candidate bandwidths:
\begin{equation*}\label{objective}
\widehat{\Omega}_0=\arg\min\limits_{\Omega\in(0, \Omega_m)}\left(\min_{\bar{\bm\theta}}
\left\|
\bm y-\bar{\bm A}(\Omega)\bar{\bm\theta}
\right\|_2^2\right).
\end{equation*}
Since $\bar{\bm\theta}=\bm D(\Omega)\bm\theta$, column normalization preserves both the matrix rank and the minimum least-squares residual. Consequently, the coefficient vector corresponding to the estimated bandwidth $\widehat{\Omega}_0$ is obtained by solving
\begin{equation*}\label{coefficient_problem}
\widehat{\bm\theta}
=
\arg\min_{\bm\theta}
\left\|
\bm y-\bm A(\widehat{\Omega}_0)\bm\theta
\right\|_2^2.
\end{equation*}
This solution $\widehat{\bm\theta}$ is unique provided that the joint dictionary matrix $\bm A(\widehat{\Omega}_0)$ has full column rank $(N+1)L$. If $\bm A(\widehat{\Omega}_0)$ is rank deficient, $\widehat{\bm\theta}$ is non-unique.

\subsection{Numerical Experiments}
In this section, we evaluate the performance of the proposed dictionary-based recovery approach. We consider $N=4$ signals with bandwidth $\Omega_m=2\pi\times1000$. The samples are generated using model \eqref{eq:dictionary_model} over the observation interval $t \in [-0.05, 0.05)$ s. For clarity, Figs. \ref{fig1:identifiability} and \ref{fig2:all sub-Nyquist} are plotted over a smaller time window. For each tested ensemble size $N$, we use the fixed model order $L=25$ and choose the sampling rates such that \eqref{eq:L} is satisfied. To assess reconstruction performance, we define the undersampling factor (UF) as $\alpha_n = \pi / (\Omega_m T_n)$ and compute the normalized mean squared error (NMSE in dB) as
\begin{equation*}
\text{NMSE} = 10\log_{10}
\left(
\frac{
\sum_{n=1}^{N}
\|\widehat{f}_n(t) - f_n(t)\|_{L^2}^{2}}
{
\sum_{n=1}^{N}
\|f_n(t)\|_{L^2}^{2}}
\right),
\end{equation*}
where $\widehat{f}_n(t)$ denotes the estimate of $f_n(t)$, and the $L^2(\mathbb{R})$ norms are evaluated analytically in the frequency domain via Parseval's identity and the closed-form Gram matrix formulation. The reported results are averaged over 500 independent Monte Carlo trials, where signal coefficients are generated randomly in each trial. A trial is deemed successful when $\Omega_0$ is estimated within a margin of $\pi$ rad/s (equivalent to 0.5 Hz). Under this setup, we evaluate two scenarios.

\noindent\textbf{Experiment 1: Under Identifiability Conditions.}
We first evaluate sampling configurations that satisfy the sufficient identifiability conditions in Assumption \ref{ass:identifiability}. Setting $\Omega_0 = 2\pi \times 500$ rad/s, two signals are sampled at the Nyquist rate ($2000$ Hz), while the remaining two are subsampled at $1000$ Hz according to Assumption \ref{innovationalias}. This yields $25\%$ reduction in the aggregate sampling rate. The proposed method achieves an average NMSE of $-302$ dB. A representative reconstruction example is shown in Fig.~\ref{fig1:identifiability}, where region intervals with prominent variations are highlighted. The estimated signals (dashed lines) closely match the ground-truth signals (solid lines), confirming perfect reconstruction under the reduced aggregate sampling rate.

\begin{figure}[t]
    \centering
    \includegraphics[width=1\linewidth]{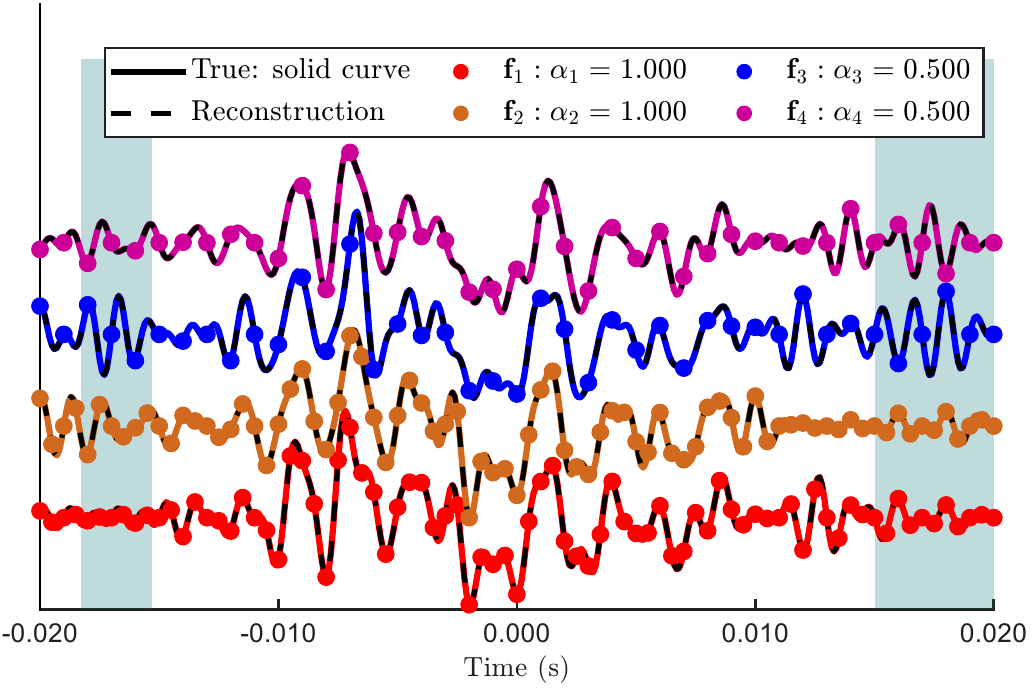}
    \caption{Reconstructed signal ensemble under sufficient identifiability conditions ($\Omega_0=2\pi\times500$ rad/s). Two signals ($f_1, f_2$) are sampled at the Nyquist rate, while two ($f_3, f_4$) are sampled at sub-Nyquist rates according to (A3). The proposed joint recovery method achieves exact reconstruction (NMSE = $-304$ dB) with a $25\%$ aggregate sampling rate reduction. Shaded regions highlight intervals of high inter-signal variance.}
    \label{fig1:identifiability}
\end{figure}

\noindent\textbf{Experiment 2: Sub-Nyquist Sampling Across All Channels.} Next, we reduce the common-to-total bandwidth ratio to $\Omega_0/\Omega_m = 0.2$ (yielding lower inter-signal correlation) and sample all four signals below their Nyquist rates. Specifically, the UFs for the four signals are $\alpha_1= 0.259, \alpha_2= \alpha_3 = 0.261$, and $\alpha_4 = 0.263$, achieving an aggregate sampling rate reduction of $\approx 74\%$. The average NMSE for this setup is $-217$ dB. Figure~\ref{fig2:all sub-Nyquist} illustrates a representative reconstruction, showing accurate signal recovery despite significant structural differences among the signal instances.

Additionally, we evaluated performance across various ensemble sizes $N$ and bandwidth ratios $\tfrac{\Omega_0}{\Omega_m} \in (0, 1)$. Under a fixed aggregate sampling rate reduction, the reconstruction errors remain consistently low (with NMSEs below $-200$~dB), except when the ratio $\tfrac{\Omega_0}{\Omega_m}$ approaches $0$ or $1$ (see Table~\ref{tab:all sub-Nyquist}). Performance degradation in these extreme regimes is driven by ill-conditioning and rank-deficiency of the joint dictionary matrix $\bm A(\Omega_0)$. Specifically, as $\tfrac{\Omega_0}{\Omega_m} \to 0$, the common bandwidth becomes narrow, rendering the common dictionary atoms poorly distinguishable over a finite observation window. Intuitively, this reflects low inter-signal correlation due to a negligible common component. Conversely, as $\tfrac{\Omega_0}{\Omega_m} \to 1$, the bandwidth of the innovation components $(\Omega_m-\Omega_0)$ vanishes, causing the sampled innovation dictionary atoms to become nearly linearly dependent as the signal instances become identical. In both limiting cases, $\bm A(\Omega_0)$ loses column rank, preventing unique recovery of the common and innovation coefficient vectors.

\begin{figure}[t]
    \centering
    \includegraphics[width=1\linewidth]{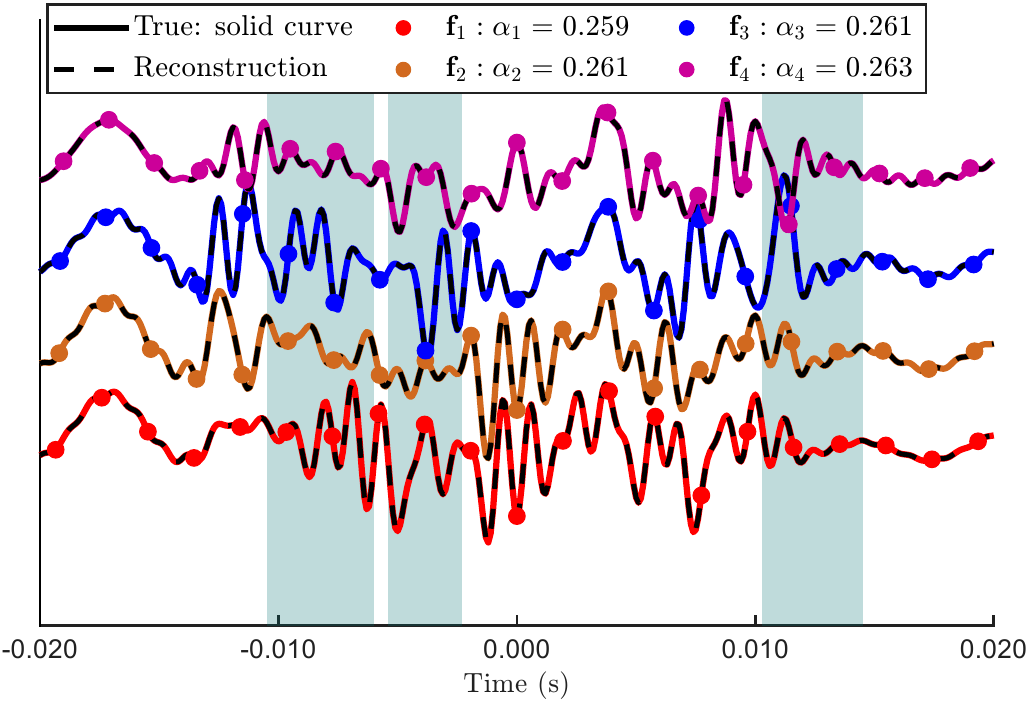}
    \caption{Reconstructed signal ensemble when all channels are sampled below the Nyquist rate ($\alpha_n \approx 0.26$, $\Omega_0=2\pi\times200$ rad/s). Despite low inter-signal correlation and a $74\%$ aggregate sampling rate reduction across all four signals, the parametric dictionary approach achieves robust joint recovery (NMSE = $-221$ dB).}
    \label{fig2:all sub-Nyquist}
\end{figure}
\begin{table}[t]
    \centering
    \caption{Average NMSEs for small and large values of $\Omega_0/\Omega_m$.}
    \begin{tabular}{c|c|c|c}
$N$ & $\Omega_0/\Omega_m$ & Aggregate rate reduction(\%) & Average NMSE\\ 
\hline
5 & 0.04 & $74.6$ &  -54\\
9 & 0.04 & $74.5$  & -64\\
\hline
5 & 0.92 & $74.6$ &  -59\\
9 & 0.92 & $74.5$  &  -62\\
    \end{tabular}
    \label{tab:all sub-Nyquist}
\end{table}

\section{Conclusion}\label{sec:conclusion}
In this paper, we established theoretical identifiability conditions and developed a joint reconstruction algorithm for an ensemble of continuous-time bandlimited signals under a common-innovation correlation model. By capturing the shared lowpass structure and disjoint channel-specific innovation components, our framework enables signal recovery well below the aggregate Nyquist rate, even when the common bandwidth is unknown. To translate theoretical guarantees into a practical simultaneous recovery procedure, we proposed a joint optimization approach using parametric structured dictionaries parameterized by the unknown common bandwidth. Numerical experiments validated our findings, achieving exact recovery with a $25\%$ aggregate sampling rate reduction under sufficient identifiability conditions, and robust recovery with up to a $74\%$ rate reduction when all channels were sampled sub-Nyquist. Future work will explore extensions to handle unknown time shifts in the dictionary atoms, dictionary mismatch, and real-time hardware.
%%%%%%%%%%%%%%%%%%%%%%%%%%%%%%%%%%%%%%%%%%%%%%%%%%%%%%%%%%%%%%%%%%%%%%%%%%%%%%%%%%%%%%%%%%%%%%%%%%%
\bibliographystyle{IEEEbib}
\bibliography{refs}

\end{document}